\title{Structural Resolution with Co-inductive Loop Detection}
\author{Yue Li
\institute{School of Mathematical \& Computer Sciences\\
Heriot-Watt University\\
Edinburgh, United Kingdom}
\email{yl55@hw.ac.uk}}
\theoremstyle{definition}
\newtheorem{defn}{Definition}
\newtheorem{exmp}{Example}[defn]
\theoremstyle{plain}
\newtheorem{thm}{Theorem}
\newtheorem{prop}{Proposition}
\newtheorem{lem}{Lemma}
\theoremstyle{remark}
\newtheorem*{rmk}{Remark}
\begin{document}
\maketitle
\begin{abstract}
A way to combine co-SLD style loop detection with structural resolution was found and is introduced in this work, to extend structural resolution with co-induction. In particular, we present the operational semantics, called co-inductive structural resolution, of this novel combination and prove its soundness with respect to the greatest complete Herbrand model.

\end{abstract}
\section{Introduction}
Co-inductive logic programming extends traditional logic programming by enabling co-inductive reasoning to deal with infinite SLD-derivation, which has practical implication in different fields of computing such as model checking, planning as well as type inference~\cite{SimonCoSLD2006,SimonThesis2006,AbstractCompAnconaCLD10,CoALPSemanImpKKJPMS16}.

One operational semantics of co-inductive logic programming is co-inductive SLD resolution (co-SLD)~\cite{SimonCoSLD2006,SimonThesis2006}, which combines loop detection with traditional SLD resolution, so that it can be used to reason co-inductively about infinite rational terms.  An alternative operational semantics for co-inductive logic programming is co-algebraic logic programming ~\cite{CoALPSemanImpKKJPMS16,StrResoKomendantskayaJ15}, which adopts a more general co-induction rule and uses structural resolution instead of SLD resolution as its induction rule. 

The distinctive features of co-algebraic logic programming, compared with co-SLD, include that the co-inductive reasoning mechanism of the former goes beyond loop detection; as a result the computed formulae by the former are allowed to be either rational terms or (finite observation of) irrational terms. Moreover,  structural resolution \cite{StrResoKomendantskayaJ15,RewTreeJohannKK15} allows for analysis of productivity \cite{StrResoKomendantskayaJ15} of logic programs. Productivity, also known as computations at infinity \cite[ch. 4]{FoundOfLPLloyd1987}, concerns computation of infinite data structures by non-terminating derivations. Productivity is studied not only in logic programming community, but also in functional programming community, where they developed decision algorithm for co-recursive list and stream definitions \cite{SijtsmaProductivityOfRecursiveListDefn,Endrullis2010PoductiveStream}.   Those distinctive features make co-algebraic logic programming an ideal basis for developing productivity decision algorithms.

In this paper we explore a combination of co-SLD style loop detection with structural resolution, resulting in a novel operational semantics called \emph{co-inductive structural resolution} (co-S-resolution for short), and we prove its soundness with respect to the greatest complete Herbrand model. An implementation is also presented as a contribution. 

 Co-inductive structural resolution is created as an intermediate semantics between co-SLD and co-algebraic logic programming. On the one hand, it inherits loop detection from co-SLD, which is a simpler co-inductive reasoning mechanism compared with the co-inductive mechanism of co-algebraic logic programming. On the other hand, it uses structural resolution as co-algebraic logic programming does, allowing for analysis of productivity. Introducing such an intermediate semantics has its practical implication: there are two challenges involved in the design of productivity decision algorithms, which are 1) productivity analysis with structural resolution and 2) co-inductive reasoning beyond loop detection; introducing the intermediate semantics makes it possible to deal with these \emph{two} challenges \emph{one at a time}, so co-S-resolution prepares future development of a productivity decision algorithm that uses loop detection, which in turn will prepare even further algorithm development that goes beyond loop detection. These points will be further discussed in Section~\ref{sec:related work}.

 An overview of the rest of the paper is as follows. In Section~\ref{sec: prelim} we will introduce preliminary concepts that cover substitution and unification with rational trees, co-SLD and structural resolution, and greatest fix point, which will prepare us for further theory construction in later sections. In Section~\ref{sec: co-SR} we will introduce the semantics for co-inductive structural resolution and prove its soundness.  In Section~\ref{sec:related work} we will have a review of related work, discuss the importance of co-S-resolution for productivity decision, and conclude the paper.  Appendix~\ref{app: imple} presents the implementation.  

\section{Preliminaries}\label{sec: prelim}

We assume readers' understanding of standard definition of \emph{first order term} and the modelling of (possibly infinite) terms by \emph{trees}. ``Term'' and ``tree'' are used interchangeably in this paper. Details about these concepts can be found in~\cite{FoundOfLPLloyd1987,TreeByCOURCELLE1983}. 

\begin{defn}[Rational Term]
A \emph{rational term} \cite{PrologIn10Figures,JaffarInfTreeLP,AnconaCoSLD15,SimonCoSLD2006} refers to a (possibly non-ground and possibly infinite) term (or tree) that has a  \emph{finite} amount of distinct sub-terms (or sub-trees). A rational term is also known as a \emph{regular term} \cite{TreeByCOURCELLE1983,Gupta2007}.     

\end{defn}

Our definition for \emph{substitution with rational trees} (referred to as \emph{substitution} for short hereinafter) inherits the principle of substitution with finite trees in logic programming \cite{FoundOfLPLloyd1987}.

\begin{defn}[Substitution]
A substitution is a mapping of the form $S=\{x_1/t_1,\ldots,x_n/t_n\}$ where $x_1,\ldots,x_n$ are distinct variables, $t_1,\ldots,t_n$ are rational terms, and $\forall\ i,j\in\{1,\ldots,n\}$, $x_i$ does not occur in $t_j$. Moreover, $\epsilon$ denotes the empty substitution.
\end{defn}
\begin{exmp}
Let $\theta=\{x_1/f(f(\ldots)), x_2/g(x_3)\}$ be a substitution. Applying $\theta$ to the term $p(x_1,x_2)$ is denoted by $p(x_1,x_2)\theta$, which evaluates to $p(f(f(\ldots)),g(x_3))$.
\end{exmp}

Composition of substitutions is defined in the same way as in~\cite[Sec. 4]{FoundOfLPLloyd1987}. 

\begin{defn}[Unification]\label{defn: unification}
Given two rational terms $t_1$ and $t_2$, unification is the process of finding a substitution (unifier) $\theta$ such that $t_1\theta=t_2\theta$,  i.e. applying $\theta$ separately to $t_1$ and $t_2$ yields the same tree. This relation is denoted by $t_1\sim_\theta t_2$.
\end{defn}
  The standard approach to rational term unification \cite{MartelliUnification,TreeByCOURCELLE1983,PrologIn10Figures,ALogicalReconstructionOfPrologII,SemanOfPrologWithoutOccursCheckWeijland1988} involves \emph{systems of equations of finite terms}, and \emph{transforms} that turn equation systems to their \emph{reduced form} as the output of the unification algorithm. The reduced form equation system can further be \emph{solved} to obtain a solution in the domain of rational trees \cite{TreeByCOURCELLE1983}.  We omit the details of the unification algorithm for rational trees and the details of solving equation systems, which can be found in the above literature. 
 
\begin{rmk}
Our definition of substitution refers to solutions of reduced form equation systems.
Consider the unification problem $p(X)\sim p(f(X))$. The standard approach regards $p(X)\sim p(f(X))$ as an equation system $\{p(X)=p(f(X))\}$ and reduces it to the reduced form $\{X= f(X)\}$, which is called a substitution in the standard sense. Locally in this paper we \emph{solve} the reduced form $\{X= f(X)\}$ to obtain the solution $\{X=f(f(\ldots))\}$ and call this solution a substitution. We believe that our treatment of substitution can emphasize the variables that are to be instantiated and will make the theory about co-inductive structure resolution easier to formulate and understand.       
\end{rmk} 
  \emph{Term matching} is a concept closely related to unification, and is prerequisite for rewriting reduction in structural resolution \cite{StrResoKomendantskayaJ15,RewTreeJohannKK15}. We extend applicable terms for matching from finite trees to rational trees by building the concept of rational tree term matching on the concept of rational tree unification. 

\begin{defn}[Term Matching]\label{defn: term matching}
Given two rational terms $t_1$, $t_2$ and a unifier $\sigma$, if $\sigma$ also satisfies that $t_1\sigma=t_2$, then it is said that $t_1$ \emph{subsumes} $t_2$, or $t_1$ \emph{matches against} $t_2$. This relation is denoted by $t_1\prec_\sigma t_2$. $\sigma$ is called a \emph{matcher} for $t_1$ against $t_2$. 
\end{defn}

\begin{rmk}[Uniqueness of matcher]
If two terms have matchers $\sigma_1$ and $\sigma_2$, then $\sigma_1$ equals to $\sigma_2$ when restricted to variables that occur in the terms. Nevertheless, a matcher $\sigma$ for term $t_1$ against $t_2$ is intended to be identity on variables not in $t_1$.
\end{rmk}

The symbolic notation for unification ($\sim$) and matching ($\prec$) follows~\cite{StrResoKomendantskayaJ15}. Note that ($\sim$) is a symmetric relation but ($\prec$) is not symmetric. Sometimes $t_1\prec t_2$ may fail to convey which term subsumes (i.e. matches against) which. A mnemonic tip is to regard the ``precede'' symbol ($\prec$) as the ``less than'' symbol ($<$) and derive from $t_1\prec t_2$ that $t_2$ might be bigger (i.e. contains more symbols) than $t_1$.

\begin{exmp}
\begin{enumerate}[a)]
\item$p(x)\sim_\theta p(f(x))$ where $\theta=\{x/f(f(f(\ldots)))\}$. $\theta$ is not a matcher.

\item $p(x_1,x_1)\sim_\theta p(f(y_1),y_1)$ where $\theta=\{x_1/f(f(f(\ldots))), y_1/f(f(f(\ldots)))\}$. $\theta$ is not a matcher.

\item $p(x_1,x_2)\sim_\theta p(f(y_1),y_1)$ where $\theta=\{x_1/f(y_1),x_2/y_1\}$. $\theta$ is a matcher and $p(x_1,x_2)\prec_\theta p(f(y_1),y_1)$. 
\end{enumerate}

\end{exmp}

  We now introduce the operational semantics of the well-known SLD-resolution ($\mathbf{L}$inear resolution for $\mathbf{D}$efinite clauses with $\mathbf{S}$election function) \cite{FoundOfLPLloyd1987,KowalVESemanLP76} as a precursor of co-SLD and of structural resolution.

\begin{defn}[SLD-resolution]
Given a logic program P and goal \[G={\leftarrow A_1,\ldots,A_n}\] if there exists in P a clause $B_0\leftarrow B_1,\ldots,B_m$ (with freshly renamed variables), such that $B_0\sim_\theta A_k$ for some $k\in\{1,\ldots,n\} $, then by SLD-resolution we derive \[G'={\leftarrow (A_1,\ldots,A_{k-1},B_1,\ldots,B_m,A_{k+1},\ldots,A_n)\theta}\] 
\end{defn}
\begin{rmk}
The notation of the form $(A_1,\ldots,A_n)\theta$ denotes application of $\theta$ to every $A_i,\ i\in\{1,\ldots,n\}$.
\end{rmk}
 In the following definition of co-SLD we introduce a set $S$ for each predicate $A$ in a goal \cite{AnconaCoSLD15}, where $S$ records all previous goals (or their instances) that are relevant to the co-inductive proof of $A$.  

\begin{defn}[co-SLD resolution]\label{defn: co-SLD reso}
Given a logic program P and a goal \[G={\leftarrow (A_1,S_1),\ldots,(A_n, S_n)}\] the next goal $G'$ can be derived by one of the following two rules:
\begin{enumerate}
\item  If there exists in P a clause $B_0\leftarrow B_1,\ldots,B_m$ (with freshly renamed variables), such that $B_0\sim_\theta A_k$ for some $k\in\{1,\ldots,n\}$, then let $S'=S_k\cup \{A_k\}$, we derive \[G'={\leftarrow \big((A_1, S_1),\ldots,(A_{k-1}, S_{k-1}),(B_1, S'),\ldots,(B_m, S'),(A_{k+1}, S_{k+1}),\ldots,(A_n,S_n)\big)\theta}\] 

\item (Loop Detection) If $A_k\sim_\theta B$ for some $k\in\{1,\ldots,n\}$ and some $B\in S_k$, we derive \[G'={\leftarrow \big((A_1, S_1),\ldots,(A_{k-1}, S_{k-1}),(A_{k+1}, S_{k+1}),\ldots,(A_n,S_n)\big)\theta}\]
\end{enumerate}
\end{defn}

\begin{rmk}
The notation of the form $\big((A_1,S_1),\ldots,(A_n, S_n)\big)\theta$ denotes application of $\theta$ to every $A_i$ and to every member of every $S_i$, $i\in\{1,\ldots,n\}$. 
\end{rmk}

\begin{defn}[co-SLD Derivation/Refutation]
A co-SLD derivation consists of a possibly infinite sequence of goals $G_0,G_1,\ldots$ where $G_0$ is of the form $\leftarrow (A_1,\emptyset),\ldots,(A_m, \emptyset)\ (m\geq 0)$, and for all $i\geq 0$, $G_{i+1}$ is derived from $G_i$ using co-SLD resolution. A finite co-SLD derivation ending with the empty goal is called an co-SLD refutation\footnote{Logic programming works by refuting the goal, which is the negation of the proposition that is to be proven. }.  
\end{defn} 

\begin{defn}[Computed Substitution]
Given a co-SLD refutation $\mathcal{D}$, let $\theta_1,\theta_2,\ldots,\theta_n$ be the sequence of unifiers computed in $\mathcal{D}$ in the same order as they were computed, their composition $\theta_1\theta_2\cdots\theta_n$ is called the computed substitution from $\mathcal{D}$. 
\end{defn}

Co-SLD is co-inductively sound \cite{AnconaCoSLD15,SimonCoSLD2006}. Co-inductive soundness is defined in terms of the \emph{greatest complete Herbrand model}. We assume the standard definition of complete Herbrand interpretation and complete Herbrand base \cite[Sec. 25]{FoundOfLPLloyd1987}, which, compared with Herbrand interpretation/base, allow for infinite ground terms and atoms in addition to finite ones.

\begin{defn}[$T'_P$ operator]
Let P be a logic program and $B'_P$ be P's complete Herbrand base. The complete immediate consequence operator $T'_P:\ 2^{B'_P}\mapsto2^{B'_P}$ is defined as follows. Let $I\subseteq B'_P$ be a complete Herbrand interpretation. Then \[T'_P(I)=\{A\in B'_P\mid A\gets A_1,\ldots,A_n\text{ is a ground instance of a clause in P and }\{A_1,\ldots,A_n\}\subseteq I \}\]
\end{defn}

\begin{defn}[Greatest complete Herbrand model]
Let P be a program. The greatest fix point $\text{gfp}(T'_P)=\cup\{I\mid I \subseteq T'_P(I)\}$ of $T'_P$ is called the greatest complete Herbrand model of P.
\end{defn}

More details on $T'_P$ operator and its fix points can be found in e.g.~\cite[Sec. 26]{FoundOfLPLloyd1987}. We now justify the loop detection rule of co-SLD with an example.

\begin{exmp}
Consider the following program P which defines co-recursively all streams of 0's and 1's.
\begin{flushleft}
\hspace{1em}bit(0)$\leftarrow$

\hspace{1em}bit(1)$\leftarrow$

\hspace{1em}bit-stream(cons(X,Xs)) $\gets$ bit(X), bit-stream(Xs)
\end{flushleft}
The co-SLD refutation for goal $\gets$bit-stream(cons(0,Xs)), following the left-first computation rule, is finished by one step of loop detection which unifies bit-stream(Xs) and bit-stream(cons(0,Xs)) with unifier $\theta$=\{Xs/cons(0,cons(0,\ldots))\}. Note that 
\begin{flushleft}
\hspace{1em}bit-stream(cons(0,Xs)) $\gets$ bit(0), bit-stream(Xs) \hfill (*)
\end{flushleft}
is an instance of the program clause, to which we can apply unifier $\theta$ and we get another program clause instance   
\begin{flushleft}
\hspace{1em}bit-stream(cons(0,cons(0,\ldots))) $\gets$ bit(0), bit-stream(cons(0,cons(0,\ldots))) \hfill (**)
\end{flushleft}
We regard (*) and (**) as proof trees \cite[Sec. 1.6]{clark1980predicate}, and notice that applying the loop detection rule extends (*) into (**), whose set $I$ of nodes satisfies $I \subseteq T'_P(I)$, therefore $I$ is a subset of gfp($T'_P$). The establishment of the relation $I \subseteq T'_P(I)$ is mainly due to the reasoning that for each atom $A\in I$ (or, for each node $A$ of proof tree (**)), there exists a program clause instance whose head is $A$, and whose body is a subset of $I$, so that if $A$ is in $I$, then $A$ is in $T'_P(I)$, indicating $I \subseteq T'_P(I)$. Such reasoning applies to all co-SLD refutation that involves use of loop detection.
\end{exmp}

\begin{defn}[Structural Resolution]
Given a logic program P and goal \[G={\leftarrow A_1,\ldots,A_n}\] the next goal $G'$ is derived using one of the following two rules:

\begin{enumerate}
\item (Rewriting Reduction) If there exists in P a clause $B_0\leftarrow B_1,\ldots,B_m$ (with freshly renamed variables), such that $B_0\prec_\theta A_k$ for some $k\in\{1,\ldots,n\}$, then we derive \[G'={\leftarrow (A_1,\ldots,A_{k-1},B_1,\ldots,B_m,A_{k+1},\ldots,A_n)\theta}\]
\item (Substitution Reduction) If there exists in P a clause $B_0\leftarrow B_1,\ldots,B_m$ (with freshly renamed variables), such that $B_0\sim_\theta A_k$ but not $B_0\prec_\theta A_k$ for some $k\in\{1,\ldots,n\}$, then we derive \[G'={\leftarrow (A_1,\ldots,A_n)\theta}\]
\end{enumerate} 
\end{defn}

\begin{rmk}
About rewriting reduction, notice that it is a special case of SLD-resolution, and since matcher $\theta$ only instantiates variables from the renamed clause $B_0\leftarrow B_1,\ldots,B_m$ without instantiating variables from goal $G$, the derived goal $G'$ can also be written as $G'={\leftarrow A_1,\ldots,A_{k-1},(B_1,\ldots,B_m)\theta,A_{k+1},\ldots,A_n}$.  About substitution reduction, notice that it is, by nature, instantiation of universal quantifier. 
\end{rmk}
\begin{defn}[S-Derivation/Refutation]
 A structural resolution derivation (S-derivation for short) consists of a possibly infinite sequence $G_0,G_1,\ldots$ of goals such that for all $i\geq 0$, $G_{i+1}$ is derived from $G_i$ using structural resolution, without consecutive use of substitution reduction. A finite S-derivation ending with the empty goal is called an S-refutation. 
\end{defn}
 \begin{defn}[Computed Substitution]
 Given a S-refutation $\mathcal{D}$, let $\theta_1,\theta_2,\ldots,\theta_n$ be the sequence of unifiers computed in $\mathcal{D}$ due to application of substitution reduction, sorted in the same order as they were computed, their composition $\theta_1\theta_2\cdots\theta_n$ is called the computed substitution from $\mathcal{D}$. 
 \end{defn}

\begin{exmp}\label{exmp: struc reso}
Consider the program:
\begin{center} 
\begin{tabularx}{\textwidth}{ XXX }
\hspace{1em}  $p(f(X))\leftarrow q(X)$ & $q(a)\leftarrow$ & $r(f(a))\leftarrow$
\end{tabularx}
\end{center}
Given goal $\leftarrow p(X),r(X)$ the next goal is $\leftarrow p(f(X_1)),r(f(X_1))$ by substitution reduction on $p(X)$ (with renamed clause $p(f(X_1))\leftarrow q(X_1)$ and unifier $\theta_1=\{X/f(X_1)\}$), then the next goal is $\leftarrow q(X_1),r(f(X_1))$ by rewriting reduction on $p(f(X_1))$ (with renamed clause $p(f(X_2))\leftarrow q(X_2)$ and matcher $\{X_2/X_1\}$), then the next goal is $\leftarrow q(a),r(f(a))$ by substitution reduction on $q(X_1)$ (with unifier $\theta_2=\{X_1/a\}$). Two more steps of rewriting reduction derive the empty goal $\leftarrow$ which terminates successfully the resolution and the computed substitution is the composition $\theta_1\theta_2=\{X/f(a),X_1/a\}$. 
\end{exmp}

For more details on structural resolution, see \cite{StrResoKomendantskayaJ15,RewTreeJohannKK15,CoALPSemanImpKKJPMS16}. Next we introduce the combination of structural resolution and co-SLD style loop detection.

\section[Co-inductive Semantics]{Co-inductive Structural Resolution}\label{sec: co-SR}
We introduce the declarative and operational semantics of co-inductive structural resolution. For the operational semantics we introduce how it was formulated and prove its co-inductive soundness.

\subsection{Declarative Semantics}
The declarative semantics of co-inductive structural resolution is chosen to be the greatest fixed point over the complete Herbrand base \cite[ch. 4]{FoundOfLPLloyd1987}\cite{SemanOfPrologWithoutOccursCheckWeijland1988}, as for co-SLD \cite{SimonCoSLD2006,SimonThesis2006,AnconaCoSLD15}. In fact, it was a conjecture \cite{KatyaCommunication} that some form of combination of structural resolution and loop detection is correct w.r.t.\ the greatest complete Herbrand model as co-SLD is, since they share the same co-induction mechanism and their inductive components (structural resolution and SLD resolution, respectively) are both sound and complete w.r.t.\ the least Herbrand model \cite{StrResoKomendantskayaJ15}.
\subsection{Operational Semantics}   
The implementation presented in Appendix~\ref{app: imple} played important role in formulation of the operational semantics. The implementation was created by integrating existing implementation of structural resolution \cite{Yue17SresoImple} and co-SLD \cite{Ancona13coSLDinProlog}, which showed plausible behaviour. So the implementation was then abstracted to obtain the operational semantics, whose soundness was later proved. The upshot is that the implementation had come before the formulation of the operational semantics, but was then verified as the soundness of the operational semantics was proved. In this section we present the operational semantics.

\begin{defn}[Co-inductive Structural Resolution]\label{defn: co-s-reso}
Given a logic program P and goal \[G={\leftarrow (A_1,S_1),\ldots,(A_n, S_n)}\] the next goal $G'$ can be derived by one of the following three rules:
\begin{enumerate}
\item  (Rewriting Reduction) If there exists in P a clause $B_0\leftarrow B_1,\ldots,B_m$ (with freshly renamed variables), such that $B_0\prec_\theta A_k$ for some $k\in\{1,\ldots,n\}$, then let $S'=S_k\cup \{A_k\}$, we derive \[G'={\leftarrow (A_1, S_1),\ldots,(A_{k-1}, S_{k-1}),(B_1\theta, S'),\ldots,(B_m\theta, S'),(A_{k+1}, S_{k+1}),\ldots,(A_n,S_n)}\] 
\item (Substitution Reduction) If there exists in P a clause $B_0\leftarrow B_1,\ldots,B_m$ (with freshly renamed variables), such that $B_0\sim_\theta A_k$ but not $B_0\prec_\theta A_k$ for some $k\in\{1,\ldots,n\}$, then we derive \[G'={\leftarrow \big((A_1, S_1),\ldots,(A_n,S_n)\big)\theta}\]
\item (Loop Detection) If $A_k\sim_\theta B$ for some $k\in\{1,\ldots,n\}$ and some $B\in S_k$, we derive \[G'={\leftarrow \big((A_1, S_1),\ldots,(A_{k-1}, S_{k-1}),(A_{k+1}, S_{k+1}),\ldots,(A_n,S_n)\big)\theta}\] 
\end{enumerate}
\end{defn}

Notice that the Loop Detection rule for co-inductive structural resolution is the same as its counterpart in co-SLD, and rule-1 of co-inductive structural resolution is a special case of rule-1 of co-SLD. 
\begin{defn}[co-S-Derivation/Refutation]\label{defn: co-s-deriv}
A co-inductive structural resolution derivation is a possibly infinite sequence $G_0,G_1,\ldots$ where $G_0$ is of the form $\leftarrow(A_1, \emptyset),\ldots,(A_m,\emptyset)\ (m\geq 0)$, and for all $i\geq 0$, $G_{i+1}$ is derived from $G_i$ by co-S-resolution without consecutive use of substitution reduction. A finite co-S-derivation ending with the empty goal is called a co-S-refutation.
 
\end{defn}
\begin{defn}[Computed Substitution]
Given a co-S-refutation $\mathcal{D}$, let $\theta_1,\theta_2,\ldots,\theta_n$ be the sequence of unifiers computed in $\mathcal{D}$ due to application of rule-2 or rule-3, sorted in the same order as they were computed, their composition $\theta_1\theta_2\cdots\theta_n$ is called the computed substitution from $\mathcal{D}$. 
\end{defn}

\begin{exmp}\label{exmp: compreh co-s-reso}
Consider program:
\begin{center}
\begin{tabularx}{\textwidth}{ XXX }
\hspace{1em}$p(s(X)) \leftarrow q(X)$ & $ q(X)\leftarrow p(X),r(X)$ & $r(X) \leftarrow$ \\
\end{tabularx}
\end{center}
In the following co-S-refutation, for each goal we always select the left most predicate to resolve.
\vspace{1em}
\begin{enumerate}[{\textrm{Goal} 1:}]
\item $\leftarrow\big(q(X),\emptyset\big)$
\item $\leftarrow\big(p(X),\{q(X)\}\big),\big(r(X),\{q(X)\}\big)$ \hfill (rule-1.~$ q(X_1)\leftarrow p(X_1),r(X_1).~\{X_1/X\}$)
\item $\leftarrow\big(p(s(X_2)),\{q(s(X_2))\}\big),\big(r(s(X_2)),\{q(s(X_2))\}\big)$ \hfill (rule-2.~$ p(s(X_2))\leftarrow q(X_2).~\theta_1=\{X/s(X_2)\}$)
\item $\leftarrow\big(q(X_2),\{p(s(X_2)),q(s(X_2))\}\big),\big(r(s(X_2)),\{q(s(X_2))\}\big)$ \hfill (rule-1.~$ p(s(X_3))\leftarrow q(X_3).~\{X_3/X_2\}$)
\item $\leftarrow\big(r(s(s(s(\ldots)))),\{q(s(s(s(\ldots))))\}\big)$ \hfill (rule-3.~$ q(X_2)\sim_{\theta_2} q(s(X_2)).~\theta_2=\{X_2/s(s(s(\ldots)))\}$)
\item $\leftarrow$ \hfill (rule-1.~$ r(X_4)\leftarrow.~\{X_4/s(s(s(\ldots)))\}$)
\end{enumerate}\vspace{1em}
The answer to Goal 1 is given by computed substitution $\theta_1\theta_2=\{X/s(s(s(\ldots))),X_2/s(s(s(\ldots)))\}$. Loop detection is used once for reduction from Goal 4 to Goal 5, and predicate $q(X_2)$ in Goal 4 is co-inductively proved.    
\end{exmp}

\subsection{Soundness Proof}

In this section, the main result shows that given a goal $G$, if there is a co-S-refutation for $G$, with computed substitution $\sigma$, then there is a co-SLD refutation for $G\sigma$, with computed substitution $\epsilon$ (i.e. the empty substitution). Therefore if co-SLD is sound, then $G\sigma\epsilon=G\sigma$ is in the greatest complete Herbrand model, meaning that co-S-resolution is also sound.

We will define a transformation algorithm that step-by-step transforms a co-S-refutation of goal $G$ into a co-SLD refutation of goal $G\sigma$. The transformation is via an intermediate derivation, called \emph{co-rewriting-id derivation}, which simply consists of co-SLD resolution steps interleaved with identity reduction steps (c.f. Definition~\ref{defn: id reduction}).  So given a co-S-refutation, it will be firstly transformed into a co-rewriting-id refutation, which will then be trivially transformed into a co-SLD refutation. The transformation from a co-S-refutation to a co-rewriting-id refutation is done during a sequential traverse of the co-S-refutation, starting from the initial goal. According to the three lemmas (i.e. Lemma~\ref{lem: rew preserv},~\ref{lem: instant preserv} and~\ref{lem: loop detect preserve}, defined later), each goal reduction step in the co-S-refutation establishes a co-rewriting-id reduction step, and all co-rewriting-id reduction steps established during the traverse form the co-rewriting-id refutation. The following are details of the proof.

\begin{defn}[Identity Reduction]\label{defn: id reduction}
Given some goal $G$, the reduction from $G$ to itself, is called identity reduction, denoted by \[G\xrightarrow{\textrm{id}}G \]
\end{defn}

\begin{defn}[co-Rewriting-ID Resolution]\label{defn: co-rew-id reso}
Given a program and some goal $G$, the next goal $G'$ can be derived from $G$ using one of following three rules:
\begin{enumerate}
\item The same as rule 1 in Definition~\ref{defn: co-s-reso}.
\item Identity reduction.
\item The same as rule 3 in Definition~\ref{defn: co-s-reso}.
\end{enumerate} 
\end{defn}

\begin{defn}[co-Rewriting-ID Derivation/Refutation]\label{defn: co-rew-id deriv}
A co-rewriting-id derivation is a possibly infinite sequence $G_0,G_1,\ldots$  where $G_0$ is of the form $\leftarrow(A_1, \emptyset),\ldots,(A_m,\emptyset)\ (m\geq 0)$, and for all $i\geq 0$, $G_{i+1}$ is derived from $G_i$ by co-rewriting-id resolution without consecutive use of identity reduction. A finite co-rewriting-id derivation ending with the empty goal is called a co-rewriting-id refutation.

\end{defn}
\begin{defn}[Computed Substitution]
Given a co-rewriting-id refutation $\mathcal{D}$, let $\theta_1,\theta_2,\ldots,\theta_n$ be the sequence of unifiers computed in $\mathcal{D}$ due to application of rule-3, sorted in the same order as they were computed, their composition $\theta_1\theta_2\cdots\theta_n$ is called the computed substitution from $\mathcal{D}$. 
\end{defn}
\begin{prop}\label{thm: co-rew-id to cosld}
Given a program, for any goal $G$, if there is a co-rewriting-id refutation for $G$ with computed substitution $\theta$, then there is a co-SLD refutation for $G$ with the same computed substitution $\theta$. 
\end{prop}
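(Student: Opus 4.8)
The plan is to construct the desired co-SLD refutation by traversing the given co-rewriting-id refutation $G_0,G_1,\ldots,G_N$ and translating each reduction step locally: every rule-1 (rewriting reduction) step becomes a co-SLD rule-1 step, every rule-3 (loop detection) step becomes a co-SLD loop detection step, and every rule-2 (identity reduction) step is simply deleted. The claim is then that the resulting sequence of goals, with the identity-repeated goals collapsed, is a genuine co-SLD refutation of $G$ computing the same answer.

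First I would check that each translated step is a legitimate co-SLD reduction producing exactly the same goal. For a rule-1 step with clause $B_0\leftarrow B_1,\ldots,B_m$ and matcher $\theta$ (so $B_0\prec_\theta A_k$), Definition~\ref{defn: term matching} already gives $B_0\sim_\theta A_k$, so the same clause and the same $\theta$ license a co-SLD rule-1 step with $S'=S_k\cup\{A_k\}$. Because $\theta$ is a matcher it instantiates only the freshly renamed clause variables and is the identity on every variable occurring in the goal; hence applying $\theta$ to the whole goal (as co-SLD does) touches only $B_1,\ldots,B_m$ and yields precisely $(B_1\theta,S'),\ldots,(B_m\theta,S')$, which is the goal produced by co-rewriting-id rule-1 — this is the observation already recorded in the remark following the definition of structural resolution. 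A rule-3 step is carried over verbatim, since co-rewriting-id rule-3 and co-SLD loop detection are literally the same rule. Deleting an identity step collapses a repetition $G_i\to G_i$; since co-rewriting-id derivations forbid consecutive identity reductions while co-SLD derivations carry no such restriction, the thinned-out sequence still has each successive pair related by a co-SLD step, starts from $G_0$ (whose hypothesis sets are empty), and ends at the empty goal, so it is a co-SLD refutation.

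The one point that genuinely needs care is that the two computed substitutions coincide. The co-rewriting-id computed substitution is the composition of the rule-3 unifiers alone, whereas the co-SLD computed substitution is the composition of all unifiers, i.e.\ the rule-1 matchers interleaved with the same rule-3 unifiers. I would argue that the matchers may be dropped from the co-SLD composition without changing its action on $\mathrm{Var}(G)$. The governing fact is freshness: a matcher computed at a given step binds only the clause variables renamed at that step, and these variables occur nowhere earlier in the derivation. Tracking the value of a fixed $x\in\mathrm{Var}(G)$ along the left-to-right composition, at the moment a matcher is applied the current value of $x$ is built only from variables of $G$ and from ranges of earlier substitutions, none of which can be the fresh variables that matcher binds; so the matcher acts as the identity on it. Peeling off all matchers one at a time leaves exactly the composition of the rule-3 unifiers, establishing that the two computed substitutions agree on $\mathrm{Var}(G)$ and differ only by harmless bindings of fresh clause variables that are irrelevant to the answer.

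The main obstacle is therefore not the goal-level translation — which is essentially bookkeeping, as the paper anticipates by calling it ``trivial'' — but this substitution accounting: making the freshness argument precise enough to conclude that the extra matchers counted by the co-SLD computed substitution leave the instantiation of the query variables untouched. A clean way to package the induction is to prove it by induction on refutation length, generalized to initial goals with arbitrary (not necessarily empty) co-inductive hypothesis sets so that the inductive hypothesis applies to the tails of the derivation; the freshness of clause variables is exactly what prevents the matcher bindings from leaking into $\mathrm{Var}(G)$ at each inductive step.
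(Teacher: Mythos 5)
Your proposal is correct and follows essentially the same route as the paper's own proof: delete the identity steps and observe that the rewriting-reduction steps are special cases of co-SLD rule-1 (via the matcher-is-a-unifier and identity-on-goal-variables observations) while loop detection carries over verbatim. Your treatment is in fact more careful than the paper's one-line argument, which simply asserts that the resulting co-SLD refutation has computed substitution $\theta$; your freshness argument showing that the rule-1 matchers act as the identity on $\mathrm{Var}(G)$, so that dropping them from the composition changes nothing on the query variables, fills in exactly the detail the paper glosses over.
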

\begin{proof}
Suppose $\mathcal{D}=G_0,\ldots,G_n$ is a co-rewriting-id refutation for $G=G_0$ with computed substitution $\theta$. By simultaneously removing from $\mathcal{D}$ all $G_{i+1}\ (i\in [0,n-1])$ such that $G_i\xrightarrow{\textrm{id}} G_{i+1}$, the resulting derivation $\mathcal{D'}$ constitutes a co-SLD derivation with computed substitution $\theta$. Moreover, $\mathcal{D'}$ is a special case of co-SLD derivation since Definition~\ref{defn: co-rew-id reso}-rule 1 is a special case of Definition~\ref{defn: co-SLD reso}-rule 1.
\end{proof}

\begin{thm}\label{thm: co-s to co-r-id}
Given a program, for any goal $G$, if there is a co-S-refutation for $G$ with computed substitution $\sigma$, then there is a co-rewriting-id refutation for $G\sigma$ with computed substitution $\epsilon$ (the empty substitution).
\end{thm}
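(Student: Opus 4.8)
The plan is to prove this by exhibiting, for a given co-S-refutation $\mathcal{D}=G_0,G_1,\ldots,G_N$ of $G=G_0$ (with $G_N$ empty and computed substitution $\sigma$), an explicit co-rewriting-id refutation of $G\sigma$ obtained by a single sequential pass over $\mathcal{D}$. For each index $i$ I would set $\sigma_i$ to be the composition of exactly those unifiers produced by rule-2 and rule-3 in the transitions $G_i\to G_{i+1}\to\cdots\to G_N$, so that $\sigma_0=\sigma$, $\sigma_N=\epsilon$, and $\sigma_i=\theta\,\sigma_{i+1}$ whenever the step $G_i\to G_{i+1}$ emits the unifier $\theta$, while $\sigma_i=\sigma_{i+1}$ when it emits none. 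The candidate co-rewriting-id derivation is then $G_0',\ldots,G_N'$ with $G_i':=G_i\sigma_i$, and the whole content of the proof is to check that each transition $G_i'\to G_{i+1}'$ is a legal co-rewriting-id step of the appropriate kind, using the three cited lemmas.

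The case analysis follows the rule used by co-S at step $i$. If the step is rule-1 (rewriting) then $\sigma_i=\sigma_{i+1}$, and Lemma~\ref{lem: rew preserv} asserts that matching is stable under instantiation of the matched atom: from $B_0\prec_\theta A_k$ one obtains $B_0\prec_{\theta'} A_k\sigma_i$ with $\theta'=\theta\sigma_i$ restricted to the fresh clause variables; since each body atom $B_j$ contains only those fresh variables we get $B_j\theta'=B_j\theta\sigma_i$, and the new coinductive set $S'\sigma_i=S_k\sigma_i\cup\{A_k\sigma_i\}$ is exactly the one produced by co-rewriting-id rule-1 at $G_i'$, so the rewritten goal is precisely $G_{i+1}\sigma_{i+1}$. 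If the step is rule-2 (substitution reduction) with unifier $\theta$, then $\sigma_i=\theta\sigma_{i+1}$ and $G_{i+1}=G_i\theta$, so Lemma~\ref{lem: instant preserv} gives $G_{i+1}\sigma_{i+1}=G_i\theta\sigma_{i+1}=G_i\sigma_i=G_i'$, and the step collapses to the identity reduction $G_i'\xrightarrow{\textrm{id}}G_i'$. If the step is rule-3 (loop detection) with $\theta$ unifying $A_k$ with some $B\in S_k$, then because $\theta$ is a unifier we have $A_k\theta=B\theta$, whence $A_k\sigma_i=A_k\theta\sigma_{i+1}=B\theta\sigma_{i+1}=B\sigma_i$; thus $A_k\sigma_i$ and $B\sigma_i\in S_k\sigma_i$ are already syntactically equal, loop detection applies at $G_i'$ with the empty unifier, and deleting that atom yields exactly $G_{i+1}\sigma_{i+1}$ (Lemma~\ref{lem: loop detect preserve}).

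It then remains to collect the boundary and side conditions. Since $G_0$ has the form $\leftarrow(A_1,\emptyset),\ldots,(A_m,\emptyset)$ and the empty coinductive sets are closed under substitution, $G_0'=G_0\sigma=G\sigma$ has the required initial shape, while $G_N'=G_N\epsilon$ is the empty goal, so $\mathcal{D}'$ is a refutation. Rule-2 steps are the only ones that become identity reductions, as rule-1 genuinely rewrites an atom and rule-3 strictly shortens the goal; hence the co-S prohibition on consecutive substitution reductions transfers to the prohibition on consecutive identity reductions demanded by Definition~\ref{defn: co-rew-id deriv}. Finally, every loop-detection step of $\mathcal{D}'$ emits the empty unifier, so the computed substitution of $\mathcal{D}'$ is $\epsilon$, as claimed.

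I expect the main obstacle to be Lemma~\ref{lem: rew preserv}, the stability of term matching under instantiation of the matched term. Unlike unification, matching ($\prec$) is asymmetric, so I must argue that pre-composing the matcher with $\sigma_i$ again yields a legitimate matcher that is the identity off the clause's fresh variables, and I must keep the fresh renamings of clauses in $\mathcal{D}$ and $\mathcal{D}'$ aligned so that $\sigma_i$ acts as the identity on clause variables. The second delicate point is pure bookkeeping but easy to get wrong: defining $\sigma_i$ as the suffix product of future unifiers and verifying the recurrences $\sigma_i=\theta\sigma_{i+1}$ and $\sigma_i=\sigma_{i+1}$ so that the equalities $G_{i+1}\sigma_{i+1}=\cdots$ telescope correctly along the entire derivation.
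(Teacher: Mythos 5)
Your proposal is correct and matches the paper's own proof essentially step for step: both transform the co-S-refutation by applying to each goal the suffix composition of the yet-to-come rule-2/rule-3 unifiers, turning rule-1 steps into rule-1 steps (Lemma~\ref{lem: rew preserv}), rule-2 steps into identity reductions (Lemma~\ref{lem: instant preserv}), and rule-3 steps into rule-3 steps with unifier $\epsilon$ (Lemma~\ref{lem: loop detect preserve}). Your indexing of the suffix substitutions by goal position, and your explicit checks of the initial-goal shape, the no-consecutive-identity-reduction condition, and the $\epsilon$ computed substitution, are only a cosmetic tidying of the bookkeeping the paper handles with its case-defined index $x$.
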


The proof of Theorem~\ref{thm: co-s to co-r-id} is based on properties of co-inductive structural resolution rules, formulated in the following three lemmas.

\begin{lem}[Rewriting Preservation]\label{lem: rew preserv}
Let \[G\xrightarrow[B]{\textrm{rule-1}}G'\] be a goal reduction using rule-1 (as defined in Definition~\ref{defn: co-s-reso}),  and $B$ the program clause involved in the reduction. 

Then for any substitution $\sigma$, it holds that \[G\sigma\xrightarrow[B]{\textrm{rule-1}}G'\sigma\] 
\end{lem}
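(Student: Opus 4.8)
The plan is to unpack the single rule-1 step into its ingredients, construct an explicit matcher for the $\sigma$-instantiated goal so that the same clause fires at the same position, and then check component by component that the resulting reduct is exactly $G'\sigma$.

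First I would make the data of the reduction explicit. Writing $G = {\leftarrow (A_1,S_1),\dots,(A_n,S_n)}$, rule-1 selects some index $k$ and a freshly renamed clause $B = B_0 \leftarrow B_1,\dots,B_m$ together with a matcher $\theta$ satisfying $B_0\theta = A_k$ (so $B_0 \prec_\theta A_k$), sets $S' = S_k \cup \{A_k\}$, and produces
\[G' = {\leftarrow (A_1,S_1),\dots,(A_{k-1},S_{k-1}),(B_1\theta,S'),\dots,(B_m\theta,S'),(A_{k+1},S_{k+1}),\dots,(A_n,S_n)}.\]
By the convention recorded in the remark after Definition~\ref{defn: co-SLD reso}, applying $\sigma$ to $G$ simply instantiates every atom and every member of every $S_i$, so the pair in position $k$ becomes $(A_k\sigma, S_k\sigma)$.

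The key step is to exhibit a matcher $\theta'$ witnessing $B_0 \prec_{\theta'} A_k\sigma$, so that rule-1 fires on $G\sigma$ at the same position $k$ with the same clause $B$. I would take $\theta'$ to be the restriction of the composition $\theta\sigma$ to the variables of $B_0$, and the identity elsewhere, in accordance with the uniqueness-of-matcher remark. Since $\theta$ instantiates only variables of $B_0$, this gives $B_0\theta' = B_0(\theta\sigma) = (B_0\theta)\sigma = A_k\sigma$, so $\theta'$ is a matcher and rule-1 applies, yielding a reduct whose position-$k$ set is $S_k\sigma \cup \{A_k\sigma\}$ and whose new body pairs are $(B_i\theta', S_k\sigma \cup \{A_k\sigma\})$.

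It then remains to verify that this reduct coincides with $G'\sigma$. The unchanged pairs and the set component match at once, since $S'\sigma = (S_k \cup \{A_k\})\sigma = S_k\sigma \cup \{A_k\sigma\}$. The only genuine point is the equality $B_i\theta' = B_i\theta\sigma$ for each body atom. On variables of $B_0$ both sides act as $\theta\sigma$ by construction, but a body atom $B_i$ may contain clause-local variables that do not occur in the head $B_0$; on such a variable $y$ one has $y\theta' = y$ whereas $y\theta\sigma = y\sigma$. This is exactly where I expect the only real care to be needed: I would invoke the freshness of the renamed clause $B$ to assume, without loss of generality (renaming $B$ if necessary, which leaves the applicability of rule-1 untouched), that the variables of $B$ avoid the domain and range of $\sigma$, so that $y\sigma = y$ and the two reducts agree. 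With this, the equality holds for every $i$ and the reduction $G\sigma \xrightarrow[B]{\textrm{rule-1}} G'\sigma$ is established.
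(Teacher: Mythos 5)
Your proof correctly isolates the crux of the lemma --- the clause-local variables that occur in the body $B_1,\dots,B_m$ but not in the head $B_0$ --- but the way you dispose of them does not work, and this is a genuine gap rather than a cosmetic one. The lemma fixes the reduction $G \xrightarrow[B]{\textrm{rule-1}} G'$ first and then quantifies over \emph{all} substitutions $\sigma$; the body-only variables of the specific renaming of $B$ used in that reduction literally occur in $G'$, so $\sigma$ is allowed to instantiate them, and $G'\sigma$ then exhibits those instantiations. You cannot ``rename $B$ so that its variables avoid the domain and range of $\sigma$'' at that point: renaming $B$ changes the reduct of $G\sigma$ into a goal with fresh body-only variables, which is then \emph{not} equal to $G'\sigma$. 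Concretely, take the program clause $p(a)\leftarrow q(Y)$, the goal $G = {\leftarrow(p(a),\emptyset)}$, the renamed clause $B = p(a)\leftarrow q(Y_1)$ with matcher $\epsilon$, so that $G' = {\leftarrow(q(Y_1),\{p(a)\})}$, and take $\sigma = \{Y_1/b\}$. Then $G\sigma = G$ and $G'\sigma = {\leftarrow(q(b),\{p(a)\})}$; your matcher $\theta'$ (the restriction of $\theta\sigma$ to $\mathrm{vars}(B_0)=\emptyset$, hence $\epsilon$) produces $\leftarrow(q(Y_1),\{p(a)\})$, and after your proposed renaming it produces $\leftarrow(q(Y_2),\{p(a)\})$ --- in neither case $G'\sigma$. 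Worse, this is not a corner case: in the lemma's intended use (the proof of Theorem~\ref{thm: co-s to co-r-id}), $\sigma$ is the composition of unifiers computed \emph{later} in the derivation, and those unifiers act precisely on variables occurring in $G'$, including the body-only variables introduced by this very rule-1 step.

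The paper avoids the problem by not restricting the matcher at all: it takes the full composition $\theta\sigma$ (in the paper's notation, $\gamma\sigma$) as the matcher witnessing $B_0 \prec_{\gamma\sigma} A_k\sigma$, so that rule-1 applied to $G\sigma$ produces body atoms $B_i(\gamma\sigma) = (B_i\gamma)\sigma$ by associativity of substitution, which is exactly what $G'\sigma$ contains; the set components then match just as in your argument. This is legitimate because Definition~\ref{defn: term matching} only requires a matcher to satisfy $B_0\theta = A_k\sigma$; the ``identity on variables not in $t_1$'' clause appears only in a remark as an intended convention, and this lemma is precisely the place where that convention must be dropped. The repair of your proof is therefore simple: define $\theta'$ to be all of $\theta\sigma$ rather than its restriction to $\mathrm{vars}(B_0)$, and the component-by-component comparison you set up goes through for the body atoms as well, with no freshness assumption on $\sigma$ needed.
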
 
\begin{proof}
Assume 
\begin{itemize}
\item $G={\leftarrow(A_1,S_1),\ldots,(A_k,S_k),\ldots,(A_n,S_n)}$ and
\item $B$ has the form $B_0\leftarrow B_1,\ldots,B_m\ (m\geq 0)$ and
\item $B_0\prec_\gamma A_k$, for some $k\in[1,n]$.
\end{itemize} 
By Definition~\ref{defn: co-s-reso}, rule-1, 
\begin{equation}\label{equa: G prime in rew preserve}
G'={\leftarrow(A_1,S_1),\ldots,(A_{k-1},S_{k-1}),(B_1\gamma,S'),\ldots,(B_m\gamma,S'),(A_{k+1},S_{k+1}),\ldots,(A_n,S_n)}
\end{equation} 
where $S'=S_k\cup\{A_k\}.$

Since $B_0\prec_\gamma A_k$ (by the above assumption), it means (by Definition~\ref{defn: term matching}) that 
\begin{equation}\label{equa: B 0 gamma equals to A k in rew pres}
B_0\gamma=A_k
\end{equation}
Then for all $\sigma$, if we apply $\sigma$ to both sides of \eqref{equa: B 0 gamma equals to A k in rew pres}, we have $B_0\gamma\sigma=A_k\sigma$, which means (by associativity of substitution  \cite[Sec. 4]{FoundOfLPLloyd1987} and Definition~\ref{defn: term matching}) that
\begin{equation}\label{equa: B0 also subsumes A k sigma}
B_0\prec_{\gamma\sigma} A_k\sigma. 
\end{equation}
Now consider $G\sigma$, by notational convention,
\begin{equation}\label{equa: G sigma form in rew preserv}
 G\sigma={\leftarrow(A_1\sigma,S_1\sigma),\ldots,(A_k\sigma,S_k\sigma),\ldots(A_n\sigma,S_n\sigma)}
\end{equation} 
Because of \eqref{equa: B0 also subsumes A k sigma} and \eqref{equa: G sigma form in rew preserv}, we can have reduction  
\begin{equation}\label{equa: reduction of G sigma in rew preserv}
G\sigma\xrightarrow[B]{\textrm{rule-1}}G''
\end{equation}
where, by Definition~\ref{defn: co-s-reso}, rule-1,  
\begin{equation}\label{equa: G double prime in rew preserv}
G''={\leftarrow(A_1\sigma,S_1\sigma),\ldots,(A_{k-1}\sigma,S_{k-1}\sigma),(B_1\gamma\sigma,S''),\ldots,(B_m\gamma\sigma,S''),(A_{k+1}\sigma,S_{k+1}\sigma),\ldots,(A_n\sigma,S_n\sigma)} 
\end{equation}
where $S''=S_k\sigma \cup \{A_k\sigma\}.$

Compare \eqref{equa: G prime in rew preserve} and \eqref{equa: G double prime in rew preserv}, we have, by notational convention, 
\begin{equation}\label{equa: G double prime is G prime sigma}
G''=G'\sigma
\end{equation}
By \eqref{equa: reduction of G sigma in rew preserv} and \eqref{equa: G double prime is G prime sigma}, we reach the conclusion of Lemma~\ref{lem: rew preserv}.
\end{proof}

Hereinafter we adopt the following notation for substitution compositions. Given a sequence of substitutions $\theta_1,\theta_2,\ldots,\theta_n$, for all $k\in\{1,\ldots,n\}$, let $\sigma_k$ denote the composition $\theta_k\theta_{k+1}\cdots\theta_n$.  For example,
let $\theta_1,\theta_2,\theta_3,\theta_4$ be a sequence of 4 substitutions, then $\sigma_1=\theta_1\theta_2\theta_3\theta_4$, $\sigma_2=\theta_2\theta_3\theta_4$, $\sigma_3=\theta_3\theta_4$ and $\sigma_4=\theta_4$.

\begin{lem}[Instantiation Preservation]\label{lem: instant preserv}
Let \[G\xrightarrow[\theta_k]{\textrm{rule-2}}G'\] be a goal reduction using rule-2 (as defined in Definition~\ref{defn: co-s-reso}), where  $\theta_k$ is the unifier involved in the reduction, and let \[\sigma_k=\theta_k\theta_{k+1}\cdots\theta_n\] for some $n> k$ and some (arbitrary and possibly $\epsilon$) substitutions $\theta_{k+1},\ldots,\theta_n$.

Then \[G\sigma_k\xrightarrow{\ \textrm{id}\ }G'\sigma_{k+1}\]
\end{lem}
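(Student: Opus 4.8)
The plan is to unfold the definition of rule-2 (Substitution Reduction) and track how the accumulated substitution $\sigma_k$ interacts with the unifier $\theta_k$ already baked into the reduction. First I would write out the goals explicitly: assume $G=\ \leftarrow (A_1,S_1),\ldots,(A_n,S_n)$ and that rule-2 applies with a renamed clause $B_0\leftarrow B_1,\ldots,B_m$ such that $B_0\sim_{\theta_k} A_k$ but not $B_0\prec_{\theta_k} A_k$. By Definition~\ref{defn: co-s-reso}, rule-2, the derived goal is $G'=\big((A_1,S_1),\ldots,(A_n,S_n)\big)\theta_k$, i.e.\ $\theta_k$ is applied uniformly to every atom and to every member of every $S_i$. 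The crucial observation is that rule-2 changes no atoms \emph{structurally}; it only instantiates, so $G'$ is literally $G\theta_k$.

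The key algebraic step is then to use associativity of substitution composition to rewrite the two sides. On the left, $G\sigma_k = G(\theta_k\theta_{k+1}\cdots\theta_n)$, and by associativity this equals $(G\theta_k)(\theta_{k+1}\cdots\theta_n)$. On the right, since $G'=G\theta_k$ and $\sigma_{k+1}=\theta_{k+1}\cdots\theta_n$, we have $G'\sigma_{k+1}=(G\theta_k)(\theta_{k+1}\cdots\theta_n)$. Hence $G\sigma_k$ and $G'\sigma_{k+1}$ are syntactically the \emph{same} goal, and the reduction between them is exactly the identity reduction $G\sigma_k \xrightarrow{\ \textrm{id}\ } G'\sigma_{k+1}$ of Definition~\ref{defn: id reduction}. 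So the lemma is essentially a bookkeeping identity: the substitution reduction step of co-S-resolution gets ``absorbed'' into the cumulative substitution and collapses to an identity step in the co-rewriting-id derivation.

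The only subtlety I would flag — and the step I expect to require the most care — is the handling of the $S_i$ components and the notational convention that $\theta$ is applied to both atoms and every member of every set $S_i$. I would need to confirm that the convention $\big((A_1,S_1),\ldots,(A_n,S_n)\big)\theta$ distributes through composition in the same way ordinary substitution does, i.e.\ that applying $\theta_k$ and then $\theta_{k+1}\cdots\theta_n$ to the pairs yields the same result as applying the composite $\sigma_k$ directly. This follows because the convention is defined pointwise (apply the substitution to each atom and each element of each set), and associativity of composition holds atom-by-atom, hence holds for the whole annotated goal. I would state this as an immediate consequence of the notational convention together with associativity, citing~\cite[Sec.~4]{FoundOfLPLloyd1987}, rather than re-verifying it component by component. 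No genuine obstacle arises because rule-2 is purely an instantiation step with no change in the set of atoms, so there is no unification or matching condition to re-establish after applying $\sigma_k$ — in contrast to Lemma~\ref{lem: rew preserv}, where the matching relation $B_0\prec A_k$ had to be re-derived under $\sigma$.
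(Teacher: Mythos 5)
Your proposal is correct and follows essentially the same route as the paper's proof: both rest on the observation that rule-2 yields $G'=G\theta_k$, then use associativity of substitution composition ($\theta_k\sigma_{k+1}=\sigma_k$) to conclude $G\sigma_k=G'\sigma_{k+1}$, which is precisely an identity reduction. Your extra remark about the pointwise application of substitutions to the sets $S_i$ is a fine clarification but does not change the argument.
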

\begin{rmk}
The sequence of $\theta$'s in Lemma~\ref{lem: instant preserv} will come from co-S-resolution steps when Lemma~\ref{lem: instant preserv} is used to prove Theorem~\ref{thm: co-s to co-r-id}.
\end{rmk}
\begin{proof}
From the premise of Lemma~\ref{lem: instant preserv}, we have 
\begin{equation}\label{equa: G prime equals to G theta k in instan pres}
G\theta_k=G'
\end{equation}
Applying $\sigma_{k+1}(=\theta_{k+1}\cdots\theta_n)$ to both sides of \eqref{equa: G prime equals to G theta k in instan pres} we have  
\begin{equation}\label{equa: sigma k plus one applied to both sides in instan pres}
G\theta_k\sigma_{k+1}=G'\sigma_{k+1}
\end{equation}
Note that in \eqref{equa: sigma k plus one applied to both sides in instan pres}  \[\theta_k\sigma_{k+1}=\sigma_k\] therefore by associativity of substitution, \eqref{equa: sigma k plus one applied to both sides in instan pres} can be written as  \[G\sigma_k=G'\sigma_{k+1}\] hence the identity reduction $G\sigma_k\xrightarrow{\ \textrm{id}\ }G'\sigma_{k+1}$.
\end{proof}

\begin{lem}[Loop Detection Preservation]\label{lem: loop detect preserve}
Let \[G\xrightarrow[\theta_k]{\textrm{rule-3}}G'\] be a goal reduction using rule-3 (as defined in Definition~\ref{defn: co-s-reso}), where  $\theta_k$ is the unifier involved in the reduction, and let \[\sigma_k=\theta_k\theta_{k+1}\cdots\theta_n\] for some $n> k$ and some (arbitrary and possibly $\epsilon$) substitutions $\theta_{k+1},\ldots,\theta_n$.

Then \[G\sigma_k\xrightarrow[\epsilon]{ \textrm{rule-3} }G'\sigma_{k+1}\]
\end{lem}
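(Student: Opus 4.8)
The plan is to mirror the structure of the two preceding preservation lemmas, but now accounting for the fact that rule-3 (loop detection) produces a nontrivial unifier $\theta_k$, which must be "absorbed" so that the corresponding co-rewriting-id step uses the empty substitution $\epsilon$. First I would unfold the premise. Writing
\[
G={\leftarrow(A_1,S_1),\ldots,(A_k,S_k),\ldots,(A_n,S_n)},
\]
the rule-3 reduction means there is some $k$ and some $B\in S_k$ with $A_k\sim_{\theta_k}B$, so that
\[
G'={\leftarrow\big((A_1,S_1),\ldots,(A_{k-1},S_{k-1}),(A_{k+1},S_{k+1}),\ldots,(A_n,S_n)\big)\theta_k}.
\]
The key algebraic fact I would extract is $A_k\theta_k=B\theta_k$ (by Definition~\ref{defn: unification}, since $\theta_k$ unifies $A_k$ and $B$), and, crucially, that $B\theta_k$ is already an element (up to the set-application convention) of $S_k\theta_k$, so loop detection remains applicable after instantiation.

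Next I would apply $\sigma_{k+1}=\theta_{k+1}\cdots\theta_n$ to the whole goal $G\sigma_k$ and use the composition identity $\theta_k\sigma_{k+1}=\sigma_k$, exactly as in Lemma~\ref{lem: instant preserv}. The aim is to show that in $G\sigma_k$ the $k$-th atom $A_k\sigma_k$ still unifies with a member of its associated set $S_k\sigma_k$, and that in fact it already unifies with the \emph{empty} unifier. The point is that $\sigma_k=\theta_k\sigma_{k+1}$ already carries the effect of $\theta_k$, so $A_k\sigma_k=A_k\theta_k\sigma_{k+1}=B\theta_k\sigma_{k+1}=B\sigma_k$; since $B\sigma_k$ lies in $S_k\sigma_k$, the atom $A_k\sigma_k$ and this set member are literally equal, hence unified by $\epsilon$. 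This is what licenses the annotation $\xrightarrow[\epsilon]{\textrm{rule-3}}$ in the conclusion rather than a reduction carrying a fresh unifier.

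Finally I would compute the result goal of this $\epsilon$-labelled rule-3 step applied to $G\sigma_k$: deleting the $k$-th pair and applying $\epsilon$ leaves
\[
{\leftarrow\big((A_1\sigma_k,S_1\sigma_k),\ldots,(A_{k-1}\sigma_k,S_{k-1}\sigma_k),(A_{k+1}\sigma_k,S_{k+1}\sigma_k),\ldots,(A_n\sigma_k,S_n\sigma_k)\big)},
\]
and I would verify that this equals $G'\sigma_{k+1}$ by the notational convention for applying substitutions to goals-with-sets, together with $\theta_k\sigma_{k+1}=\sigma_k$ (so that the $\theta_k$ built into $G'$ combines with $\sigma_{k+1}$ to give $\sigma_k$ on each component). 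The main obstacle I anticipate is precisely this bookkeeping step: showing that the deleted-atom goal obtained from $G\sigma_k$ coincides with $G'\sigma_{k+1}$ requires care because $G'$ already has $\theta_k$ applied to every atom and every set member, and one must check the set-application convention interacts correctly with composition on all remaining components $S_i$, not just the atoms $A_i$. Establishing $A_k\sigma_k=B\sigma_k$ with $B\sigma_k\in S_k\sigma_k$ is routine once the unification equation is in hand; the genuine content is the equality $G''=G'\sigma_{k+1}$, which I would treat exactly as equation~\eqref{equa: G double prime is G prime sigma} was treated in the proof of Lemma~\ref{lem: rew preserv}.
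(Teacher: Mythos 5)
Your proposal is correct and follows essentially the same route as the paper's own proof: unfold the rule-3 premise, derive $A_k\theta_k=B\theta_k$ from the unification, use $\theta_k\sigma_{k+1}=\sigma_k$ and associativity to get $A_k\sigma_k=B\sigma_k$ with $B\sigma_k\in S_k\sigma_k$ (hence an $\epsilon$-unifier rule-3 step on $G\sigma_k$), and finish by the bookkeeping identity $G''=G'\sigma_{k+1}$. No gaps; the step you flag as the genuine content is handled in the paper exactly as you propose.
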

\begin{proof}
Assume 
\begin{equation}
G={\leftarrow(A_1,S_1),\ldots,(A_k,S_k),\ldots,(A_n,S_n)}
\end{equation}
and 
\begin{equation}\label{equa: loop pres A k unifies B}
A_k\sim_{\theta_k}B
\end{equation}
for some $k\in[1,n]$ and some 
\begin{equation}\label{equa: B in S k loop pres}
B\in S_k
\end{equation}
By Definition~\ref{defn: co-s-reso}, rule-3, 
\begin{equation}\label{equa: G prime verbose}
G'={\leftarrow\big((A_1,S_1),\ldots,(A_{k-1},S_{k-1}),(A_{k+1},S_{k+1}),\ldots,(A_n,S_n)\big)\theta_k}
\end{equation} 

From \eqref{equa: loop pres A k unifies B} and Definition~\ref{defn: unification},
\begin{equation}\label{equa: A k theta k equals to b theta k}
A_k\theta_k = B\theta_k
\end{equation}
Applying $\sigma_{k+1}(=\theta_{k+1}\cdots\theta_n)$ to both sides of \eqref{equa: A k theta k equals to b theta k}, we have $A_k\theta_k\sigma_{k+1} = B\theta_k\sigma_{k+1}$, then due to $\sigma_k=\theta_k\sigma_{k+1}$ and associativity of substitution, 
\begin{equation}
A_k\sigma_{k} = B\sigma_{k}
\end{equation}
which means
\begin{equation}\label{equa: loop new unif}
A_k\sigma_{k}\sim_\epsilon B\sigma_k
\end{equation}

Consider $G\sigma_k$, which can be written as 
\begin{equation}\label{equa: loop G sigma k form}
G\sigma_k={\leftarrow(A_1\sigma_k,S_1\sigma_k),\ldots,(A_k\sigma_k,S_k\sigma_k),\ldots,(A_n\sigma_k,S_n\sigma_k)}
\end{equation}
Due to \eqref{equa: B in S k loop pres}, it holds that 
\begin{equation}\label{equa: B sigma k in S k sigma k loop}
B\sigma_k\in S_K\sigma_k
\end{equation} 
From \eqref{equa: loop new unif} and \eqref{equa: B sigma k in S k sigma k loop}, $G\sigma_k$ as in \eqref{equa: loop G sigma k form} can be reduced using Definition~\ref{defn: co-s-reso} rule 3, resulting in 
\begin{equation*}
G''={\leftarrow(A_1\sigma_k,S_1\sigma_k),\ldots,(A_{k-1}\sigma_k,S_{k-1}\sigma_k),(A_{k+1}\sigma_k,S_{k+1}\sigma_k),\ldots,(A_n\sigma_k,S_n\sigma_k)}
\end{equation*}which can be rewritten in a simpler form 
\begin{equation}\label{equa: G sigma k reduced loop}
G''={\leftarrow\big((A_1,S_1),\ldots,(A_{k-1},S_{k-1}),(A_{k+1},S_{k+1}),\ldots,(A_n,S_n)\big)\sigma_k}
\end{equation}
The reduction of $G\sigma_k$ is denoted by  
\begin{equation}\label{equa: loop pres reduction rule-3}
G\sigma_k\xrightarrow[\epsilon]{ \textrm{rule-3} }G''
\end{equation}

Comparing \eqref{equa: G prime verbose} with \eqref{equa: G sigma k reduced loop}, we conclude, by associativity of substitution, that \[G''=G'\sigma_{k+1}\] and with \eqref{equa: loop pres reduction rule-3} we reach the conclusion of Lemma~\ref{lem: loop detect preserve}.
\end{proof}
 Next we give an algorithm that outputs co-rewriting-id refutations, giving a co-S-refutation as input. This algorithm constitutes our proof of Theorem~\ref{thm: co-s to co-r-id} and we will provide an example to demonstrate the algorithm at work.  

\begin{proof}[Proof of Theorem~\ref{thm: co-s to co-r-id}]
Given a goal $G=G_0$, assume $\mathcal{D}=G_0,\ldots,G_n$ is a co-S-derivation, and $\theta_1,\ldots\theta_m$ is the sequence of unifiers computed during derivation $\mathcal{D}$ due to the use of Definition~\ref{defn: co-s-reso} rule 2 or 3. Let $\theta_{m+1}=\epsilon$ so $\sigma_1=\theta_1,\ldots\theta_m\theta_{m+1}$ is the computed substitution for goal $G$.

Definition~\ref{defn: co-s-reso} is the default domain when we mention rule 1,2 or 3 in this proof. We build the co-rewriting-id derivation $\mathcal{D'}$ of $G\sigma_1$ using the following algorithm.

For each $i\in\{0,\ldots,n-1\}$, \emph{starting from $i=0$ and in the ascending order of $i$}, \hfill (*)

\begin{itemize}
\item If \[G_i\xrightarrow[B]{\textrm{rule-1}}G_{i+1}\] then write down, by Lemma~\ref{lem: rew preserv}, \[G_i\sigma_x\xrightarrow[B]{\textrm{rule-1}}G_{i+1}\sigma_x\]
where \[ x= \begin{cases}
     1        & \quad \text{If }i=0;\\
     k        & \quad \text{If } i>0\text{ and }G_i\sigma_k\text{ is in }\mathcal{D'}.\\
  \end{cases}
\]
$x$ is well-defined in the second case because of the condition (*).
\item If \[G_i\xrightarrow[\theta_k]{\textrm{rule-2}}G_{i+1}\] then write down, by Lemma~\ref{lem: instant preserv}, \[G_i\sigma_k\xrightarrow{\textrm{id}}G_{i+1}\sigma_{k+1}\]
\item If \[G_i\xrightarrow[\theta_k]{\textrm{rule-3}}G_{i+1}\] then write down, by Lemma~\ref{lem: loop detect preserve}, \[G_i\sigma_k\xrightarrow[\epsilon]{\textrm{rule-3}}G_{i+1}\sigma_{k+1}\]
\end{itemize}  

\end{proof}
\begin{rmk}
Compare the specific co-S-refutation for goal $G_0$, which has computed substitution $\sigma_1=\theta_1\theta_2\theta_3\theta_4$ where $\theta_4=\epsilon$,  with the co-rewriting-id refutation for goal $G_0\sigma_1$ generated by the algorithm.
\begin{align*}
G_0\textcolor{white}{\sigma_1} & \xrightarrow[B_1]{rule-1} & G_1\textcolor{white}{\sigma_1} & \xrightarrow[B_2]{rule-1} & G_2\textcolor{white}{\sigma_1} & \xrightarrow[\theta_1]{rule-2} & G_3\textcolor{white}{\sigma_1} & \xrightarrow[\theta_2]{rule-3} & G_4\textcolor{white}{\sigma_1} & \xrightarrow[B_3]{rule-1} & G_5\textcolor{white}{\sigma_1} &\xrightarrow[\theta_3]{rule-3} & (G_6\textcolor{white}{\sigma_1}=\leftarrow)\\
G_0\sigma_1 & \xrightarrow[B_1]{rule-1} & G_1\sigma_1 & \xrightarrow[B_2]{rule-1} & G_2\sigma_1 & \xrightarrow{\textcolor{white}{ab}id\textcolor{white}{-3}} & G_3\sigma_2& \xrightarrow[\epsilon]{rule-3} & G_4\sigma_3& \xrightarrow[B_3]{rule-1} & G_5\sigma_3& \xrightarrow[\epsilon]{rule-3} & (G_6\sigma_4=\leftarrow)
\end{align*}
\end{rmk}
\begin{thm}[Soundness]
Co-inductive structural resolution is sound with respect to the greatest complete Herbrand model. In other words, if a goal $G$ has a co-S-derivation with computed substitution $\sigma$, then $G\sigma$ is in the greatest model. 
\end{thm}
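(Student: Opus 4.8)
The plan is to chain together the two transformation results already established with the known co-inductive soundness of co-SLD, exactly as foreshadowed at the start of this section. Concretely, suppose the goal $G$ has a co-S-refutation with computed substitution $\sigma$. First I would apply Theorem~\ref{thm: co-s to co-r-id} to obtain a co-rewriting-id refutation for $G\sigma$ whose computed substitution is the empty substitution $\epsilon$. Next I would feed this into Proposition~\ref{thm: co-rew-id to cosld}, which converts any co-rewriting-id refutation into a co-SLD refutation with the same computed substitution; hence $G\sigma$ has a co-SLD refutation with computed substitution $\epsilon$.

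At this point the only remaining ingredient is the fact, recalled earlier, that co-SLD resolution is co-inductively sound with respect to the greatest complete Herbrand model~\cite{AnconaCoSLD15,SimonCoSLD2006}: whenever a goal admits a co-SLD refutation with computed substitution $\tau$, the instantiated goal is in the greatest model. Applying this with $\tau=\epsilon$ to the co-SLD refutation of $G\sigma$ yields that $(G\sigma)\epsilon = G\sigma$ lies in the greatest complete Herbrand model, which is exactly the desired conclusion.

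The step carrying essentially all of the difficulty is already discharged inside Theorem~\ref{thm: co-s to co-r-id} (and its supporting Lemmas~\ref{lem: rew preserv},~\ref{lem: instant preserv} and~\ref{lem: loop detect preserve}), so the remaining composition is routine. The main point that needs care in the write-up is the bookkeeping of computed substitutions across the two transformations: I must verify that the $\sigma$ computed by the co-S-refutation is precisely the substitution applied to $G$ before entering the co-rewriting-id and co-SLD refutations, and that these latter refutations genuinely compute $\epsilon$, so that the final appeal to co-SLD soundness instantiates $G\sigma$ by the empty substitution rather than by some residual unifier. A secondary subtlety, since co-SLD soundness is invoked as an external result, is to make explicit that the object produced is a legitimate co-SLD refutation in the sense of Definition~\ref{defn: co-SLD reso}; this is guaranteed by Proposition~\ref{thm: co-rew-id to cosld}, whose proof observes that rule-1 of co-rewriting-id resolution is a special case of co-SLD rule-1.
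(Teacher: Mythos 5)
Your proposal is correct and follows exactly the same route as the paper's own proof: apply Theorem~\ref{thm: co-s to co-r-id} to obtain a co-rewriting-id refutation of $G\sigma$ with computed substitution $\epsilon$, convert it via Proposition~\ref{thm: co-rew-id to cosld} into a co-SLD refutation, and invoke the known co-inductive soundness of co-SLD to conclude that $G\sigma\epsilon = G\sigma$ is in the greatest complete Herbrand model. Your additional remarks on substitution bookkeeping and on the legitimacy of the resulting co-SLD refutation are sound but do not change the argument.
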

\begin{proof}
Given a program and some goal $G$, assume $G$ has a co-S-derivation with computed substitution $\sigma$. By Theorem~\ref{thm: co-s to co-r-id} $G\sigma$ has a co-rewriting-id derivation with computed substitution $\epsilon$, then by Proposition~\ref{thm: co-rew-id to cosld} $G\sigma$ has a co-SLD derivation with computed substitution $\epsilon$. Since co-SLD is sound with respect to the greatest complete Herbrand model, $G\sigma\epsilon=G\sigma$ is in the model.  
\end{proof}
\section{Related Work and Conclusion}\label{sec:related work}
Existing soundness proof for co-SLD helped this work. A soundness proof of co-SLD, based on co-induction, is provided in~\cite{SimonThesis2006}, in which it was established by a lemma that if some goal $G$ has co-SLD derivation with computed substitution $\sigma$, then $G\sigma$ also has a co-SLD derivation. This idea inspired the author to explore if a goal has a co-S-derivation, whether there is also a co-S-derivation for the same goal with computed answer applied. Another soundness proof of co-SLD is given in~\cite{AnconaCoSLD15}, which is based on the theory of infinite tree logic programming formulated in~\cite{JaffarInfTreeLP}. The infinite tree derivation proposed in \cite{JaffarInfTreeLP} selects all sub-goal altogether rather than one sub-goal at a time,  and it is sound with respect to the greatest model. In \cite{AnconaCoSLD15} the soundness of co-SLD is proved by showing that any co-SLD derivation can be unfolded into an infinite tree derivation, therefore the soundness of co-SLD derivation is backed by the soundness of infinite tree derivation. Our proof in this paper obviously is inspired by such technique in \cite{AnconaCoSLD15}, which relates different derivations and reuses previous results.     

As a by-product of our proof, we can use the same arguments to show that soundness and completeness of structural resolution can be proved based on soundness and completeness of SLD resolution. For soundness, if a goal $G$ has a successful structural resolution derivation with computed substitution $\sigma$, then $G\sigma$ has a successful rewriting-id derivation, which is a special case of SLD derivation. For completeness, it needs to be shown that every SLD derivation has a corresponding structural resolution derivation, by splitting each non-rewriting step in SLD derivation into one step of substitution reduction followed by one step of rewriting reduction. 

Future work will involve development of a productivity semi-decision algorithm based on co-inductive structural resolution. Now we have a sketch of the role that will be played by co-S-resolution. Given a non-terminating SLD derivation, necessarily some (maybe none) of its SLD resolution steps are rewriting reductions. A class of programs are characterized for their termination for rewriting \cite{StrResoKomendantskayaJ15,KatyaProductivityChecker16}, called \emph{observationally productive} programs. Since all consecutive rewriting steps are finite for such programs, in a non-terminating SLD derivation of some observationally productive program, there necessarily are infinite steps of non-rewriting SLD resolution steps. This fact will be crucial for productivity analysis since only non-rewriting steps can produce unifiers that may accumulate and instantiate the original goal into an infinite tree at infinity. Since the notion of productivity relies on rewriting reduction, productivity analysis is made easier by S-resolution compared with using SLD resolution; hence the advantage of structural resolution over SLD resolution. Moreover, loop detection needs to be combined with S-resolution to serve as a finite implementation of non-terminating productive S-derivations; finding a way for such a combination, proving its co-inductive soundness, and implementing it, are the contributions of this paper.             
\paragraph{Acknowledgement}
I would like to thank my supervisor Dr.~Ekaterina Komendantskaya for her support and discussion. I would like to thank Dr.~Joe Wells and anonymous reviewers for their constructive comments.

\bibliographystyle{eptcs}
\bibliography{../../yueRef}

\newpage
\appendix
\section[Implementation]{Implementation of Co-Inductive Structural Resolution}\label{app: imple}
SWI-Prolog (Multi-threaded, 64 bits, Version 7.2.3) \\ http://www.swi-prolog.org 
\begin{verbatim}
%---------------------------------------------------------
clause_tree(true,_) :- !.
clause_tree((G,R), Hypo) :-
   !,                                                     % Note 1
   clause_tree(G, Hypo),
   clause_tree(R, Hypo).

clause_tree(A,Hypo) :- find_loop(A,Hypo).

clause_tree(A, Hypo) :-    % rewriting reduction
            unifying_and_matching_rule(A, Body),
            clause_tree(Body, [A|Hypo]).                  % Note 2

clause_tree(A, Hypo) :-    % substitution reduction.
            unifying_not_matching_rule(A, _),
            clause_tree(A, Hypo).                         % Note 3

%---------------------------------------------------------
find_loop(A,[B|_]) :- A = B.
find_loop(A,[_|C]) :- find_loop(A,C).

% choose clauses whose heads unifies with the goal,
% and specifically, matches the goal.
unifying_and_matching_rule(A, Body) :-
         copy_term(A,A_copy),                             % Note 4 
         clause(A_copy,_,Ref),                            % Note 5
         clause(A1,_,Ref),                                % Note 6 
         subsumes_term(A1,A),                             % Note 7 
         clause(A,Body,Ref).                              % Note 8 

% choose clauses whose head unifies with the goal, 
% and specifically, does not match the goal.
unifying_not_matching_rule(A, Body) :-
        copy_term(A,A_copy),
        clause(A_copy,_,Ref),
        clause(A1,_,Ref),
        \+ subsumes_term(A1,A),
        clause(A,Body,Ref).
%--------------------------------------------------------
\end{verbatim}        

\begin{enumerate}[{Note} 1]
\item Clauses deal with mutually exclusive cases, hence the cuts. 

\item \verb|A| is not instantiated by finding a matching clause.  

\item \verb|A| is instantiated by finding a unifying but not matching clause. 
\item At run time variable \verb|A| is bound to the current (atomic sub-)goal $G$, \verb|A_copy| then is a variant $\acute{G}$ of $G$ with fresh variables.  Built-in  \verb|copy_term/2| is used to  make a copy  of $G$ to use in the next procedure \verb|clause(A_copy,_,Ref)| to search  for a unifying rule  without instantiating variables from $G$.

\item At run time, this procedure finds some clause whose head unifies with the variant $\acute{G}$ of the current (atomic sub-)goal $G$ and get the clause's reference number $n$ that is bound to \verb|Ref|. The term $\acute{G}$ bound to \verb|A_copy| may be instantiated. The body of the found clause, which may be instantiated, is discarded as  indicated by ``\verb|_|''.    

\item Use the reference number \verb|Ref| to get a copy of the found clause by the previous procedure `\verb|clause(A_copy,_,Ref)|'. Only the head, which is bound to \verb|A1|, of the clause is needed for subsumes check, and the body of the clause is discarded as shown by `\verb|_|'.   

\item Term matching is checked by using built-in predicate \verb|subsumes_term/2|, which does not instantiate variables. Any binding made for subsumes check will be undone by implementation of \verb|subsumes_term/2|.

\item If the subsumes check is passed by the found unifying clause, then use a fresh copy of this particular clause, as specified by \verb|Ref|, to reduce the goal. Variables in the term $t$ bound to \verb|A| will not be instantiated because the clause head subsumes $t$ as has been checked, but variables in the body of the clause, which is bound to \verb|Body|, are instantiated by sub-terms from $t$. 
\end{enumerate}

One of the anonymous reviewers of this paper suggested that 

\begin{quote}
``The two clauses for rewriting and substitution reduction can be merged into a single one to make the interpreter more compact and efficient (but maybe a bit less readable).'' 
\end{quote}
\begin{quote}
And he/she suggested the following code:
\begin{verbatim}
clause_tree(A, Hypo) :-
     copy_term(A,A_copy),
     clause(A_copy,_,Ref),
     clause(A1,_,Ref),
     subsumes_term(A1,A) *-> clause(A,Body,Ref),
                             clause_tree(Body, [A|Hypo])
                             ;
                             clause(A,Body,Ref),
                             clause_tree(A, Hypo).
\end{verbatim}
\end{quote}

\begin{verbatim}
% Example object programs
% -------------------------
% trace: clause_tree(a,[])
a  :- a1,a2.
a1 :- b1,b2.
b1 :- c1,c2.
a2.
b2.
c1.
c2.
%--------------------------
% trace: clause_tree(p(X),[]).
p(s(X)) :- p(X),p(X).  % non-linear co-recursion
%--------------------------
% trace: clause_tree(cond_f(X),[]).
cond_c(s(a)).
cond_e(s(_)).
cond_f(X) :- cond_e(X),cond_c(X).
%--------------------------
% trace: clause_tree(q(X),[]).
r(_).
p(s(X)) :- q(X).
q(X) :- p(X),r(X).

\end{verbatim}

\end{document}